\newtheorem{theorem}{Theorem}
\newtheorem{lemma}{Lemma}
\newtheorem{definition}{Definition}
\theoremstyle{definition}
\newtheorem{example}{Example}
\begin{document}
%
\title{At Least Factor-of-Two Optimization for RWLE-Based Homomorphic Encryption}
\author{Jonathan Ly \\ University of Washington \\ jly02@uw.edu
}
\maketitle
\begin{abstract}
\begin{quote}
Many modern applications that deal with sensitive data, such as healthcare and government services, outsource computation to cloud platforms. In such untrusted environments, privacy is of vital importance. One solution to this problem is homomorphic encryption (HE), a family of cryptographic schemes that support certain algebraic operations on encrypted data without the need for decryption. However, despite major advancements, encryption in modern HE schemes still comes with a non-trivial computational overhead that can hamper data-intensive workloads. To resolve this, recent research has shown that leveraging caching techniques, such as Rache, can significantly enhance the performance of HE schemes while maintaining security. Rache unfortunately displays a key limitation in the time complexity of its caching procedure, which scales with the size of the plaintext space. Smuche is another caching scheme that simultaneously improves the scalability of the caching procedure and turns the encryption process into a constant-time operation, utilizing only a single scalar multiplication. Even still, more can be done. In this paper, we present an encryption method we call ``Zinc" which entirely forgoes the multiple caching process, replacing it with a single scalar addition, and then injecting randomness that takes constant time with respect to the plaintext space. This injection of randomness is similar to Smuche, and a great improvement from Rache, allowing Zinc to achieve efficiency without compromising security. We implement the scheme using Microsoft SEAL and compare its performance to vanilla CKKS. The results show further progress toward a reasonable level of practicality for homomorphic encryption.
\end{quote}
\end{abstract}

\section{Introduction}

\subsection{Background and Motivation}

Systems handling sensitive numeric data and employed in untrustworthy environments must ensure that said data is securely stored without incurring an unacceptable performance overhead. In particular, this issue has become evident in machine learning, where some models trained on distributed platforms are expected to maintain privacy of the data being trained on. Even sharing the models themselves is insecure; it has been shown that one can acquire the training data only from model gradients \cite{NEURIPS2019_60a6c400}. 

The advent of fully homomorphic encryption (FHE) has provided a plausible solution to this problem. Models can be both trained and shared in fully encrypted form, providing strong security guarantees in public settings. Nonetheless, the heavy computational cost remains an issue, driving a recent push to design procedures that mitigate this cost as much as possible, as in Rache \cite{10.1145/3588920}.

Rache, however, comes with the caveat that caching too many ciphertexts can result in overall \textit{reduction}, pointing to scalability issues. Another issue is that Rache only works on integer values, with it being difficult to extend the scheme to accommodate for real values. Hence we turn our attention to the CKKS \cite{asiacrypt-2017-28280} scheme, which does support encryption of real (and complex) values. When it comes to implementations of vanilla CKKS, libraries such as Microsoft SEAL \cite{sealcrypto} are already heavily optimized to the point that it is difficult to improve much further. We propose an approach that allows for at least factor-of-two reduction in encryption times, with a negligible pre-computation overhead, and without sacrificing strong security guarantees.

\subsection{Contributions}

Our proposed scheme exploits the structure of modern homomorphic encryption scheme ciphertexts in order to design an optimized version that, while similar in essence to prior caching schemes, looks completely different beneath the surface. Our contributions are threefold:
\begin{enumerate}
    \item We propose an algorithm, dubbed Zinc, which is able to produce an encrypted HE ciphertext with both constant-time caching and constant-time randomization. Zinc is the first of cached HE algorithms which is able to fully operate in constant time with respect to the ciphertext space.
    \item We formally define our security goals and threat model, and then state the correctness and security of the Zinc scheme, which is proven in a typical reduction format.
    \item We implement Zinc using Microsoft SEAL \cite{sealcrypto} and perform empirical analysis showing that our scheme achieves at least factor-of-two reductions in encryption time against both CKKS \cite{asiacrypt-2017-28280} and Rache \cite{10.1145/3588920} as baselines.
\end{enumerate}

\section{Preliminaries and Related Work}

\subsection{Homomorphic Encryption}

\textit{Homomorphic} is a name that comes from the idea of a \textit{homomorphism}, which itself comes from the theory of groups. A group is defined as follows.

\begin{definition}
    A \textbf{group} is a tuple consisting of a non-empty set $G$ and a binary operation $\circ$, denoted $(G, \circ)$ that meets four axioms:
    \begin{enumerate}
        \item (Closure) If $a, b \in G$, then $a \circ b \in G$.
        \item (Associativity) If $a, b, c \in G$, then $(a \circ b) \circ c = a \circ (b \circ c)$.
        \item (Identity) There exists $e \in G$ such that $e \circ a = a \circ e = a$ for all $a \in G$.
        \item (Inverse) For every $a \in G$, there exists $i \in G$ such that $a \circ i = i \circ a = e$.
    \end{enumerate}
\end{definition}

This sets us up to define functions between groups. In our case, we would like for our function to have a ``structure-preservation" or ``operation-preservation" property. For this, we define the homomorphism:

\begin{definition}
    Let $(F, \oplus)$ and $(G, \otimes)$ be groups. A function $\phi : F \to G$ is called a \textbf{homomorphism} just in case
    \begin{equation*}
        \phi(f_1 \oplus f_2) = \phi(f_1) \otimes \phi(f_2) 
    \end{equation*}
    for all $f_1, f_2 \in F$.
\end{definition}

Homomorphisms show up everywhere in mathematics. For demonstration, we provide a very simple example.

\begin{example}
    Consider the groups $F = (\mathbb{R}, +)$ and $G = (\mathbb{R}^+, \times)$, where $\mathbb{R}$ and $\mathbb{R}^+$ denote the reals and the positive reals, respectively. Define $\phi(x) = 2^x$. It is clear that 
    \begin{equation*}
        \phi(a + b) = 2^{a + b} = 2^a \times 2^b = \phi(a) \times \phi(b)
    \end{equation*}
    so $\phi$ is a homomorphism.
\end{example}

Homomorphic encryption makes use of the properties of homomorphism between a \textit{plaintext space} and the \textit{ciphertext space}. More precisely, we would like for a the encryption function to be a homomorphism, i.e. 

\begin{equation}\label{eq:1}
    \mathsf{Enc}(a + b) = \mathsf{Enc}(a) \oplus \mathsf{Enc}(b)
\end{equation}
for plaintexts $a, b$, an encryption function $\mathsf{Enc}$, and an operation $\oplus$ over the ciphertexts. However, we note that in reality, Eq. \eqref{eq:1} can only be satisfied in the case that the encryption scheme is deterministic. More realistically, we expect that an HE scheme will have randomized ciphertexts, leading to the condition

\begin{equation}\label{eq:2}
    \mathsf{Dec}(\mathsf{Enc}(a) \oplus \mathsf{Enc}(b)) = a + b
\end{equation}
for the decryption function $\mathsf{Dec}$. Such a scheme is called \textit{additively} homomorphic. An example of such a scheme is Paillier \cite{10.1007/3-540-48910-X_16}, which is one of the schemes that Rache was implemented over. A \textit{multiplicatively} homomorphic scheme is defined analogously:

\begin{equation}\label{eq:3}
    \mathsf{Dec}(\mathsf{Enc}(a) \otimes \mathsf{Enc}(b)) = a \times b
\end{equation}
for another operation $\otimes$ over the ciphertexts.

A scheme which exhibits only one or the other property, but not both, is called \textit{partially homomorphic}. A scheme which exhibits both simultaneously is called \textit{fully homomorphic}. Constructing fully homomorphic encryption (FHE) schemes proved to be exceptionally difficult until Gentry \cite{10.1145/1536414.1536440} showed that it was possible using ideal lattices.

The most popular of modern FHE schemes include BGV \cite{cryptoeprint:2011/277}, BFV \cite{cryptoeprint:2012/144}, and CKKS \cite{asiacrypt-2017-28280}. These schemes are based on hardness assumptions of the learning with errors (LWE) \cite{10.1145/1568318.1568324} or the ring variant ring-LWE (RWLE) \cite{10.1007/978-3-642-13190-5_1} problems. This generation of FHE makes sizable leaps in performance, but is not quite as good as we would like it to be just yet.

\subsection{Provable Security}

It is necessary to define our security goals so that we may demonstrate the security of our scheme. Informally, we suppose that an adversary with (reasonably) limited computational power attempts to ``break" the scheme. The adversary wins if there is a non-negligibly high chance of the adversary guessing a plaintext correctly for a given ciphertext. This notion is formalized by the \textit{chosen-plaintext attack} (CPA).

\begin{definition}(Chosen-Plaintext Attack)
    For a given security parameter $\kappa$, such as the bitstring length of a key, an adversary $\mathcal{A}$ may obtain $poly(\kappa)$ plaintext-ciphertext pairs, denoted $(m, c)$, where $m$ is picked by $\mathcal{A}$ arbitrarily, and $poly(\cdot)$ is a polynomial function. $\mathcal{A}$ then tries to decrypt some $c'$ that is not included in the ciphertexts that were already computed.
\end{definition}

The restriction of polynomially-bounded resources is in line with the usual idea of efficiency; the more formal way to express this is that $\mathcal{A}$ is a probabilistic polynomial-time (PPT) Turing machine. Security under this kind of attack is coined \textit{indistinguishability} under CPA, or IND-CPA.

\begin{definition}(IND-CPA)\label{ind-cpa}
    Suppose an adversary $\mathcal{A}$ begins a chosen-plaintext attack against a challenger $\mathcal{C}$. After acquiring $poly(\kappa)$ ciphertext-plaintext pairs, $\mathcal{A}$ sends two plaintexts, $m_1$ and $m_2$, to $\mathcal{C}$. $\mathcal{C}$ flips a coin $b \in \{ 0,1 \}$, and sends back $\mathsf{Enc}(m_b)$. A scheme is called \textbf{IND-CPA} if $\mathcal{A}$ cannot guess which plaintext was encrypted with significantly better than 1/2 probability.
\end{definition}

The degree of ``significance" is quantified formally like so:

\begin{definition}
    A function $\mu(\cdot)$ is considered \textbf{negligible} if, for any polynomial $poly(n)$, 
    \begin{equation*}
        \mu(n) < \frac{1}{poly(n)}
    \end{equation*}
    for sufficiently large $n$.
\end{definition}

So a scheme is IND-CPA if the adversary cannot guess with better than $1/2 + \mu(\kappa)$ probability, with $\mu$ a negligible function, and $\kappa$ a security parameter. We will continue to use $\kappa$ to denote a security parameter throughout the paper. The following lemma about negligible functions will also be used in a later section.

\begin{lemma}\label{lem:1}
    Let $\mu(\cdot)$ be a negligible function and $\gamma(\cdot)$ a non-negligible function. Then $\mu(n) \leq \gamma(n)$ for sufficiently large $n$.
\end{lemma}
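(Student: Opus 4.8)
The plan is to chain the two defining inequalities together through a single, shared polynomial. The definition of negligibility quantifies over \emph{all} polynomials, so it is strong enough to be instantiated at whichever polynomial the non-negligible function $\gamma$ hands us; the whole argument then reduces to picking that polynomial and taking the maximum of two thresholds. No limits or estimates are genuinely required.

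First I would unpack what it means for $\gamma$ to be non-negligible. Reading this in the \emph{noticeable} sense that is surely intended here, it supplies a polynomial $poly(\cdot)$ together with a threshold $N_1$ such that $\gamma(n) \ge 1/poly(n)$ for all $n \ge N_1$. Next, since $\mu$ is negligible, its defining property applies in particular to \emph{this} polynomial $poly$: there is a threshold $N_2$ with $\mu(n) < 1/poly(n)$ for all $n \ge N_2$. Setting $N = \max\{N_1, N_2\}$ and splicing the two bounds, I get for every $n \ge N$
\[
    \mu(n) < \frac{1}{poly(n)} \le \gamma(n),
\]
which is exactly the claimed inequality (in fact with the stronger strict inequality).

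The single delicate point — and the step I would watch most carefully — is the quantifier structure concealed in the word ``non-negligible.'' If one takes the literal negation of negligibility, then the bound $\gamma(n) \ge 1/poly(n)$ is only guaranteed for \emph{infinitely many} $n$ rather than for all sufficiently large $n$, and then the chained inequality can only be asserted for those infinitely many $n$. To land the statement precisely as written, I would either fix the convention up front that $\gamma$ is noticeable (i.e.\ eventually bounded below by some inverse polynomial), or else weaken the conclusion to the infinitely-often form. Once that convention is pinned down, everything else is just the instantiate-and-maximize argument above.
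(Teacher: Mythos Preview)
Your argument is correct and follows essentially the same route as the paper: instantiate the negligibility of $\mu$ at the particular polynomial witnessing that $\gamma$ is non-negligible, and chain the two inequalities through $1/poly(n)$. Your explicit handling of the thresholds $N_1,N_2$ and your caveat about the noticeable-versus-literal-negation reading of ``non-negligible'' are more careful than the paper's own terse proof, which silently assumes the noticeable interpretation.
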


\begin{proof}
    By definition,
    \begin{equation*}
        \mu(n) < \frac{1}{poly(n)} \quad \text{and} \quad \gamma(n) \geq \frac{1}{poly(n)}
    \end{equation*}
    So, 
    \begin{equation*}
        \mu(n) < \frac{1}{poly(n)} \leq \gamma(n)
    \end{equation*}
    Giving the claim.
\end{proof}

\subsection{Polynomials and Distributions}

We denote by $\mathbb{Z}_q/(X^N + 1)$ the polynomial ring whose elements are polynomials with coefficients modulo $q$ (called the \textit{coefficient modulus}) and of degree $N$. A plaintext consists of one ring element (polynomial) and a ciphertext is a pair of polynomials. Then we define three key distributions:
\begin{enumerate}
    \item $R_q$ is the uniform distribution over $\mathbb{Z}_q/(X^N + 1)$.
    \item $\chi$ is a discrete Gaussian distribution with $\sigma = 3.2$.
    \item $R_2$ is used to sample polynomials with coefficients in $\{-1, 0, 1\}$.
\end{enumerate}
We will use the notation $[\cdot]_q$ to denote that polynomial arithmetic should all be done modulo $q$. Additionally, $\lVert v \rVert_\infty$, the infinity norm of a polynomial $v$, denotes the largest coefficient of $v$.

\subsection{Rache: Radix-Additive Cached HE} 

Our work naturally builds off ongoing inquiry into methods for optimizing homomorphic encryption without giving up on security guarantees. The current state-of-the-art is Rache \cite{10.1145/3588920}. The key observation made by Rache is that homomorphic additions are much less computationally costly than encrypting fresh ciphertexts is. This is exploited in three steps, which are roughly explained. We refer readers to the original paper for more detail.

\subsubsection{Precomputation} 

Rache selects a radix $r$, and computes the $r$-power series ciphertexts, caching $\mathsf{Enc}(r^i)$, called \textit{pivots}, for $r^i \leq 2^\kappa$. This typically will encompass the entire plaintext space. The simplest choice of $r$ is $2$, though the paper uses a heuristically-chosen value of $r = 6$.

\subsubsection{Construction} 

The ciphertext is constructed by first representing a plaintext $m$ in its base-$r$ form, which may be represented as the array $digits$ of the digits of $m$ in base-$r$. The pivots are then summed appropriately.

\subsubsection{Randomization}

Before this stage, the constructed ciphertext is deterministic, so we must add some randomness, and this is done by constructing a random $\mathsf{Enc}(0)$. In Rache, this is done by iterating over every $r^i$, and randomly choosing whether or not to add $\mathsf{Enc}(r^i)$. If it is added, then $\mathsf{Enc}(r^i)$ is subtracted $r$ times.

\section{Zinc}

Zinc is founded on the same ideas as Rache, but largely gets rid of extensive computations at all three stages. In particular, we notice that in CKKS, plaintext-ciphertext operations take even less time than ciphertext-ciphertext operations. There are still three steps in Zinc, but they are far more straightforward than in Rache.

Usually, encrypting a fresh ciphertext takes two polynomial multiplications and at least two polynomial additions. Our crucial observation is that we can remove the need for both of the multiplication operations by maintaining a pre-encrypted ciphertext, and adding new randomness in the form of two polynomial additions.

\subsection{Scheme Description}

\subsubsection{Precomputation}

Zinc precomputes and caches a single $\mathsf{Enc}(0)$. This is all that needs to occur at this stage.

\subsubsection{Construction}

We assume that the message $m$ has already been encoded into a suitable plaintext format, and we will reuse the $+$ symbol for plaintext-ciphertext multiplication. It is clear that $0 + m = m$, so utilizing homomorphism, we may compute
\begin{equation}
    \mathsf{Enc}(m) = \mathsf{Enc}(0) + m
\end{equation}

One should note that this plaintext-ciphertext addition step corresponds precisely to the structure of a ciphertext, as is shown in a later section (see Eq. \eqref{eq:8}). That being said, since $m$ is a deterministic encoding and $\mathsf{Enc}(0)$ is cached, this results in a deterministic ciphertext for $\mathsf{Enc}(m)$.

\subsubsection{Randomization}

At this point, we are no longer able to treat the underlying HE scheme as a black box. Hence, we will cover the ciphertext structure of modern HE schemes, such as the aforementioned BFV \cite{cryptoeprint:2012/144} and CKKS \cite{asiacrypt-2017-28280}, at a high-level in order to understand how the randomization step works.

A ciphertext $c$ in FHE schemes like those above consists of a tuple of polynomials, 
\begin{equation}
    c = (c_1, c_2)
\end{equation}
such that $c_1, c_2 \in \mathbb{Z}_q / (X^N + 1)$. Most of the message is contained in $c_1$, as we will see momentarily. The secret key $sk$ is just a small ``ternary" polynomial. That is, $sk \leftarrow R_2$, where $R_2$ is a distribution from which we sample polynomials with uniformly distributed coefficients in $\{-1, 0, 1\}$.

The public key $pk$ has the same general structure as a ciphertext, 
\begin{equation}
    pk = (pk_1, pk_2)
\end{equation}
where
\begin{equation}\label{eq:7}
\begin{cases}
    pk_1 = [-a \cdot sk + e]_q \\
    pk_2 = a \leftarrow R_q
\end{cases}
\end{equation}
given $e \leftarrow \chi$. We then sample $u \leftarrow R_2$ and $e_1, e_2 \leftarrow \chi$, and use $pk$ to define $c_1$ and $c_2$ for a message $m$:
\begin{equation}\label{eq:8}
\begin{cases}
    c_1 = [pk_1 \cdot u + e_1 + m]_q \\
    c_2 = [pk_1 \cdot u + e_2]_q
\end{cases}
\end{equation}

This finally leads us to the critical observation. Note that both of $e_j$ are simply small polynomials$-$we can simply add another polynomial to each component of the ciphertext to randomize it. Specifically, we sample $e_1'$ and $e_2'$, returning the randomized ciphertext
\begin{equation}
    \mathsf{Enc}_{z}(m) = c' = ([c_1 + e_1']_q, [c_2 + e_2']_q)
\end{equation}
Where $c = (c_1, c_2)$ is the ciphertext computed in the construction step, and $\mathsf{Enc}_z$ the Zinc encryption function. This concludes the randomization step.

\subsection{Correctness}
The entire encryption procedure is straightforwardly formalized in Algorithm \ref{alg:encz}.
\begin{algorithm}
\caption{$\mathsf{Enc}_z$: Zinc Encryption}\label{alg:encz}
\begin{algorithmic}[1]
\item[\textbf{Input:}] A plaintext $m$ and a cached $\mathsf{Enc}(0)$, $zero$
\item[\textbf{Output:}] A ciphertext $c'$ satisfying $\mathsf{Dec}(c') = m$
\State $c = zero + m$
\State $e_1', e_2' \gets \chi$
\State $c' =  ([c_1 + e_1']_q, [c_2 + e_2']_q)$
\end{algorithmic}
\end{algorithm}
Before we continue to its correctness, decryption is defined as
\begin{equation}\label{eq:dec}
    \mathsf{Dec}(c) = [c_1 + c_2 \cdot sk]_q
\end{equation}
We show that the $\mathsf{Enc}_z(m)$ is a valid encryption for a plaintext $m$.

\begin{theorem}\label{thm:1}
    Let $q$ be the coefficient modulus, and $m$ a plaintext. Then $\mathsf{Dec}(\mathsf{Enc}_z(m)) \equiv_q m$.
\end{theorem}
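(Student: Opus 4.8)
The plan is to unfold the definition of $\mathsf{Enc}_z(m)$ through the three stages of Algorithm~\ref{alg:encz} and then apply the decryption map \eqref{eq:dec} directly, checking that the secret-key–dependent masking terms cancel and leave $m$ perturbed only by a small noise polynomial. First I would write out the two components of $c' = \mathsf{Enc}_z(m)$ explicitly. The cached $zero = \mathsf{Enc}(0)$ has the shape of \eqref{eq:8} with $m = 0$; the construction step adds the encoded plaintext to the first component only, and the randomization step adds the fresh samples $e_1', e_2' \gets \chi$, so that
\begin{equation*}
    c_1' = [pk_1 \cdot u + e_1 + m + e_1']_q, \qquad c_2' = [pk_2 \cdot u + e_2 + e_2']_q .
\end{equation*}

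Next I would substitute the public-key definitions from \eqref{eq:7}, namely $pk_1 = [-a \cdot sk + e]_q$ and $pk_2 = a$, into $\mathsf{Dec}(c') = [c_1' + c_2' \cdot sk]_q$. Expanding $c_1' + c_2' \cdot sk$, the term $-a \cdot sk \cdot u$ arising from $pk_1 \cdot u$ is cancelled exactly by the term $a \cdot u \cdot sk$ arising from $c_2' \cdot sk$, using commutativity of multiplication in $\mathbb{Z}_q/(X^N+1)$; note that this cancellation is precisely what forces the second component to carry $pk_2 = a$ rather than $pk_1$. What remains is
\begin{equation*}
    \mathsf{Dec}(c') = \Big[\, m + \underbrace{e \cdot u + e_1 + e_1' + (e_2 + e_2')\cdot sk}_{=:\,E} \,\Big]_q,
\end{equation*}
so correctness reduces to controlling the accumulated noise $E$.

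The main obstacle, and the only genuinely quantitative step, is bounding $\lVert E \rVert_\infty$ and arguing it stays below the decryption threshold. I would invoke the ring expansion estimate $\lVert fg \rVert_\infty \le N \lVert f \rVert_\infty \lVert g \rVert_\infty$ together with the facts that $u, sk \gets R_2$ have infinity norm at most $1$ and that each error drawn from $\chi$ is bounded, with overwhelming probability, by some $B = O(\sigma)$. This yields $\lVert E \rVert_\infty = O(NB)$, the same order as the noise of a fresh CKKS ciphertext: the randomization has merely contributed the two extra summands $e_1'$ and $e_2' \cdot sk$, inflating the noise by at most a small constant factor. Hence, whenever the scheme's parameters are chosen so that fresh-encryption noise sits below the threshold with margin, that same margin absorbs Zinc's additions and $[m + E]_q$ recovers $m$ in the scheme's sense. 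The step I expect to require the most care is stating this threshold cleanly, since $\equiv_q$ must be read as recovery of $m$ up to the controlled noise $E$ (exact after rounding for BFV, within the approximation tolerance for CKKS) rather than as literal congruence of $m + E$ with $m$ modulo $q$.
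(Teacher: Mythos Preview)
Your proof is correct and follows essentially the same route as the paper's: unfold the two ciphertext components, substitute the public-key definitions from \eqref{eq:7}, observe the cancellation of the $a \cdot u \cdot sk$ terms, and conclude that decryption returns $m$ plus a small noise polynomial. Your treatment is in fact more careful than the paper's---which simply asserts that the residual $v$ is ``very small'' without any quantitative estimate---and you correctly note that $c_2$ must carry $pk_2$ rather than the $pk_1$ that appears (as a typo) in \eqref{eq:8}.
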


\begin{proof}
    The proof proceeds by algebra.
    \begin{align*}
        \mathsf{Dec}&(\mathsf{Enc}_z(m)) \\ 
            &\equiv_q c_1 + c_2 \cdot sk \\
            &\equiv_q pk_1 \cdot u + e_1  + e_1' + m + (pk_2 \cdot u + e_2 + e_2') \cdot sk
    \end{align*}
    For simplicity, we write $e_a = e_1 + e_1'$ and $e_b = e_2 + e_2'$. Then we expand $pk_1$ and $pk_2$ with Eq. \eqref{eq:7}:
    \begin{align*}
        pk_1 \cdot u &+ e_1  + e_1' + m + (pk_2 \cdot u + e_2 + e_2') \cdot sk \\
            &\equiv_q pk_1 \cdot u + e_a + m + (pk_2 \cdot u + e_b) \cdot sk \\
            &\equiv_q (-a \cdot sk + e) \cdot u + e_a + m + (a \cdot u + e_b) \cdot sk \\
            &\equiv_q e \cdot u + m + e_a + e_b \cdot sk \\
            &\equiv_q m + v
    \end{align*}
    Where $\lVert v \rVert_\infty$ is very small, as it is a sum of small polynomials. Thus, this is a correct decryption of $m$.
\end{proof}

\subsection{Security}

We state a homomorphic encryption scheme as the quintuple
\begin{equation*}
    \Pi = (\mathsf{Gen}, \mathsf{Enc}, \mathsf{Dec}, \oplus, \otimes)
\end{equation*}
Where $\mathsf{Gen}$ is a function generating an encryption key $k$ of length $\kappa$, $\mathsf{Enc}$ and $\mathsf{Dec}$ are encryption and decryption functions parameterized by $k$. $\oplus$ and $\otimes$ denote the additive and multiplicative homomorphic operations, respectively. We then define its Zinc-extension as
\begin{equation}
    \widetilde{\Pi} = (\mathsf{Gen}, \mathsf{Enc}_z, \mathsf{Dec})
\end{equation}
Where $\mathsf{Enc}_z$ is defined as in Algorithm \ref{alg:encz}. We show the security of the scheme in what follows.

\begin{theorem}
    If the underlying HE scheme $\Pi$ is IND-CPA, then its Zinc-extension $\widetilde{\Pi}$ is IND-CPA.
\end{theorem}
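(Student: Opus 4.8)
The plan is to argue by a standard security reduction. I would assume, for contradiction, that the Zinc-extension $\widetilde{\Pi}$ is \emph{not} IND-CPA, so that there exists a PPT adversary $\mathcal{A}$ winning the IND-CPA game against $\widetilde{\Pi}$ with advantage $\gamma(\kappa)$ for some non-negligible $\gamma$, and then build a PPT adversary $\mathcal{B}$ that uses $\mathcal{A}$ as a black-box subroutine to win the IND-CPA game against the underlying scheme $\Pi$ with the same advantage. Since $\Pi$ is assumed IND-CPA, this yields the desired contradiction.

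The construction has $\mathcal{B}$ simultaneously play the challenger $\mathcal{C}$ in the $\widetilde{\Pi}$-game presented to $\mathcal{A}$ and the adversary in the $\Pi$-game against a true challenger. The central observation is that the Zinc randomization step is a \emph{public} post-processing operation: given any base ciphertext $c = (c_1, c_2)$, anyone can sample $e_1', e_2' \gets \chi$ and form $([c_1 + e_1']_q, [c_2 + e_2']_q)$. Thus, whenever $\mathcal{A}$ issues an encryption query on a chosen plaintext, $\mathcal{B}$ first obtains a base-scheme encryption (via its own encryption oracle, or by encrypting directly with $pk$ in the public-key setting) and then applies this noise-injection to return a well-formed Zinc ciphertext. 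When $\mathcal{A}$ submits its challenge pair $m_1, m_2$, $\mathcal{B}$ forwards the same pair to its own challenger, receives $c^\ast = \mathsf{Enc}(m_b)$, applies the noise-injection to obtain $c'$, hands $c'$ to $\mathcal{A}$, and finally outputs whatever bit $\mathcal{A}$ guesses.

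The crux of the argument, and the step I expect to be the main obstacle, is showing that the ciphertext $c'$ synthesized by $\mathcal{B}$ is distributed \emph{identically} to a genuine $\mathsf{Enc}_z(m_b)$, so that $\mathcal{B}$'s simulation of $\mathcal{A}$'s view is perfect and the advantage carries over exactly. This rests on two facts drawn from the ciphertext structure in Eq. \eqref{eq:8}: that $\mathsf{Enc}(0) + m$ and a fresh $\mathsf{Enc}(m)$ are identically distributed, since the plaintext-ciphertext addition merely injects $m$ into the first component while the fresh $u, e_1, e_2$ terms are drawn from the same distributions in both cases; and that adding the fresh Gaussian noise $e_1', e_2'$ to either yields the same final distribution. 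The point requiring care is that the Zinc noise in each component is a sum $e_j + e_j'$ of two samples from $\chi$ rather than a single sample, but because $\mathsf{Enc}(m_b)$ and $\mathsf{Enc}(0)$ use independent fresh error terms with the same law, $\mathcal{B}$'s injected noise reproduces this convolution exactly, affecting only the noise magnitude already bounded in Theorem \ref{thm:1} and not the distribution of $\mathcal{B}$'s output. With the two views shown to coincide, $\mathcal{B}$ inherits the full non-negligible advantage $\gamma(\kappa)$; comparing this against the negligible advantage permitted by the IND-CPA security of $\Pi$ and invoking Lemma \ref{lem:1} to rule out a non-negligible function being bounded above by a negligible one delivers the contradiction and completes the proof.
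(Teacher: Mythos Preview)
Your proposal is correct and follows essentially the same reduction as the paper: construct $\mathcal{B}$ by forwarding $\mathcal{A}$'s challenge pair to the $\Pi$-challenger, post-processing the returned ciphertext with fresh noise $e_1', e_2' \gets \chi$, handing the result to $\mathcal{A}$, and outputting $\mathcal{A}$'s guess, then invoking Lemma~\ref{lem:1} for the contradiction. Your treatment is actually more careful than the paper's, which neither simulates the encryption-oracle queries nor argues distributional identity beyond observing that $c'$ ``is in exactly the form'' of $\mathsf{Enc}_z(M_b)$.
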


\begin{proof}
    The problem of breaking $\widetilde \Pi$ is reduced to that of breaking $\Pi$. Assuming $\Pi$ is IND-CPA, suppose for contradiction that there exists a PPT adversary $\mathcal{A}$ for $\widetilde{\Pi}$ that can distinguish ciphertexts with non-negligible probability.
    
    Without loss of generality, assume $\mathcal{A}$ would choose the plaintexts $M_1$ and $M_2$. We construct an adversary $\mathcal{B}$ for $\Pi$ that runs the following protocol:
    \begin{enumerate}
        \item $\mathcal{B}$ sends the plaintexts $M_1$ and $M_2$ to the challenger $\mathcal{C}$.
        \item $\mathcal{C}$ sends back a ciphertext $\mathsf{Enc}(M_b) = (c_1, c_2)$ for a random $b \in \{0,1\}$.
        \item $\mathcal{B}$ generates two polynomials, $e_1', e_2' \leftarrow \chi$. Then, it constructs the ciphertext $c' = ([c_1 + e_1']_q, [c_2 + e_2']_q)$.
        \item $\mathcal{B}$ asks $\mathcal{A}$ to distinguish which of $M_0$ or $M_1$ is encrypted by $c'$.
        \item $\mathcal{A}$ makes a guess $b'$, and $\mathcal{B}$ outputs $b'$.
    \end{enumerate}
    Generating $e_1'$ and $e_2'$ are clearly polynomial time with respect to the plaintext space: given a polynomial modulus degree of $N$, this can be done $\mathcal{O}(N)$ time. Since $\mathcal{B}$ then simply asks $\mathcal{A}$ to perform its own operations, $\mathcal{B}$ is PPT.
    
    Now, notice that the ciphertext $c'$ is in exactly the form that is encrypted by $\mathsf{Enc}_z(M_b)$, and by Theorem \ref{thm:1}, it is a valid encryption of $M_b$, so $\mathcal{A}$ may guess on it. Let $CPA_X^\mathcal{X}$ denote the CPA indistinguishability experiment as in Definition \ref{ind-cpa} with scheme $X$ and adversary $\mathcal{X}$. By supposition, $\mathcal{A}$ is able to guess correctly with probability
    \begin{equation*}
        \Pr[CPA_{\widetilde\Pi}^\mathcal{A} = 1] \geq \frac{1}{2} + \gamma(\kappa)
    \end{equation*}
    for a non-negligible $\gamma$. Since $\mathcal{B}$ guesses based on what $\mathcal{A}$ does, 
    \begin{equation*}
        \Pr[CPA_\Pi^\mathcal{B} = 1] = \Pr[CPA_{\widetilde\Pi}^\mathcal{A} = 1] \geq \frac{1}{2} + \gamma(\kappa)
    \end{equation*}
    However, it is assumed that $\Pi$ is IND-CPA. So
    \begin{equation*}
        \Pr[CPA_\Pi^\mathcal{B} = 1] < \frac{1}{2} + \mu(\kappa)
    \end{equation*}
    for a negligible $\mu$. This implies that
    \begin{equation*}
        \frac{1}{2} + \mu(\kappa) > \frac{1}{2} + \gamma(\kappa) \implies \mu(\kappa) > \gamma(\kappa)
    \end{equation*}
    which cannot happen by Lemma \ref{lem:1}. We conclude that the scheme $\widetilde\Pi$ is IND-CPA.
\end{proof}

In practice, most existing schemes are already proven to be IND-CPA. Zinc is built on CKKS \cite{asiacrypt-2017-28280}, which itself is known to be IND-CPA.

\section{Evaluation}

\subsection{Theoretical Analysis}

Many modern implementations of the CKKS scheme use a modified, RNS accelerated \cite{10.1007/978-3-030-10970-7_16} version, which performs a discrete Fourier transform (DFT) on the polynomials. This comes with the advantage that polynomial multiplication becomes faster. But does become a bottleneck in the Zinc scheme. 

To see why, recall that vanilla CKKS performs two polynomial multiplications and three additions. The additions can be done in a linear manner, so it only remains that
\begin{enumerate}
    \item A DFT has to be performed on \textit{each} polynomial, taking $\mathcal{O}(N \log N)$ time.
    \item After the DFT, we perform polynomial multiplication, taking $\mathcal{O}(N \log N)$ time.
\end{enumerate}
So encryption in CKKS is (roughly) $\mathcal{O}(N \log N)$.

To make Zinc compatible with RNS accelerated operations, we must also perform a DFT on both $e_1'$ and $e_2'$, making encryption in Zinc theoretically have the same quasilinear complexity that CKKS does. 

We remark that, in practice, Zinc reduces the number of DFTs performed to only two, and has no need for multiplication, resulting in at least a factor-of-two performance increase regardless of the asymptotic complexity remaining the same.

\subsection{Scheme Flexibility}

For clarity, we emphasize that Zinc can be implemented on top of any HE scheme with similar characteristics to CKKS, including BFV \cite{cryptoeprint:2012/144} and BGV \cite{cryptoeprint:2011/277}, both of which are also implemented in Microsoft SEAL. For demonstration, we will quickly look at the ciphertext structures of BFV and BGV.

In both BFV and BGV, additional consideration is given to the plaintext space, which is similar to the ciphertext space, defined as $\mathbb{Z}_t/(X^N + 1)$ for a coefficient modulus $t \ll q$ (recall that $q$ is the ciphertext space coefficient modulus). 

In BFV, we allow the error to grow in the \textit{lower} bits of the ciphertext, so we scale up the message by a factor $\Delta = \lfloor q/t \rfloor$. The ciphertext is then
\begin{equation}\label{eq:12}
\begin{cases}
    c_1 = [pk_1 \cdot u + e_1 + \Delta \cdot m]_q \\
    c_2 = [pk_1 \cdot u + e_2]_q
\end{cases}
\end{equation}
So we may use the Zinc without modification in BFV.

In BGV, the opposite is done, though the idea remains the same. We now allow the error to grow in the \textit{upper} bits. Thus, we now scale the error rather than the message. Formally, the ciphertext becomes
\begin{equation}\label{eq:13}
\begin{cases}
    c_1 = [pk_1 \cdot u + t \cdot e_1 + m]_q \\
    c_2 = [pk_1 \cdot u + t \cdot e_2]_q
\end{cases}
\end{equation}
Only a minor alteration needs to be made for Zinc to be compatible with BGV. That is, in order to account for the scaled error, we add $t \cdot e_1'$ and $t \cdot e_2'$.

Correctness can be easily verified by the same algebraic process to Theorem $\ref{thm:1}$, albeit with a different decryption algorithm. The details of decryption in each scheme can be found in their original papers, as well as plaintext encoding and decoding.

\subsection{System Implementation}

The Zinc scheme is implemented on top of CKKS \cite{asiacrypt-2017-28280} in the Microsoft SEAL library \cite{sealcrypto}. We additionally implement a baseline form of Rache \cite{10.1145/3588920} for comparison, showing that performance benefits can be gained over the current state-of-the-art. Zinc is written in the C++ language, compiled with GCC. 

Microsoft SEAL provides several functions that allow Zinc to be implemented with ease. Parameters are fixed between each scheme to ensure fair testing. Specifically, we use a polynomial modulus degree of $N = 32768$, the largest $N$ defined by the homomorphic encryption standard (HES) \cite{cryptoeprint:2019/939}. 

The coefficient modulus is determined automatically using SEAL's \texttt{CoeffModulus::BFVDefault(N)} method. In general, these are prime numbers as large as $2^{64} - 1$, and may also be chosen manually. The HES defines the maximum value of $\log_2 q$ to be 881. In SEAL, the default chosen is a list of primes around $2^{55}$ in size, with the last being close to $2^{56}$ to reach 881 bits total. CKKS uses a plaintext scale (similar to BFV's scale) that is chosen before encryption. SEAL recommends that the scale, $\Delta$, is chosen to be close to the intermediate primes. For our implementation, that means $\Delta = 2^{55}$. 

Our implementation of Rache in SEAL is the batch version, using a radix of $r = 2$. 

\subsection{Experimental Setup}

Microsoft SEAL is not inherently multi-threaded, so for the purpose of these tests, no parallel processing features are used in the implementation. The benchmarks were run on a single node on CloudLab \cite{Duplyakin+:ATC19} with the following specifications: 128 GB of physical memory (RAM), a single AMD EPYC 7302P 16-Core Processor @ 3GHz, on Ubuntu 22.04 LTS. 

We perform a set of micro benchmarks on random numbers testing the performance of an implementation of Rache \cite{10.1145/3588920} in SEAL, displaying the scalability limitations of Rache for a modern FHE scheme, particularly in heavily optimized libraries like SEAL. Additionally, we compare their performance on three real-world datasets to simulate real-world application:
\begin{enumerate}
    \item U.S. Covid-19 Statistics \cite{covid19}, containing 341 days of 16 metrics, such as \textit{death increase}, \textit{positive increase}, and \textit{hospitalized increase}.
    \item A sample of Bitcoin trade volume \cite{bitcoin} from February 2013 to January 2022, totalling 1,086 floating-point numbers taken on a 3-day basis within the time period.
    \item Human genome reference number 38 \cite{hg38}, also called \textit{hg38}, consisting in 34,424 rows of singular attributes, including \textit{transcription positions}, \textit{coding regions}, and \textit{number of exons}, last updated in March 2020.
\end{enumerate}
All singular benchmarks reported were done five times each, and the average of the five runs is reported.

\begin{table}
\centering
\begin{tabular}{r r r}
 \hline
    $n$ Pivots & Rache $\mathsf{RHE}$ (ms) & Ratio over CKKS \\ 
 \hline
     4 &   8.03 & 0.14 \\ 
     8 &  15.51 & 0.27 \\
    16 &  29.85 & 0.53 \\
    32 &  58.35 & 1.01 \\
    64 & 109.81 & 1.91 \\
 \hline
\end{tabular}
 \caption{Scalability of Rache in SEAL (CKKS $\sim 57.6$ ms)}
 \label{table:1}
\end{table}

\subsubsection{Scalability of Rache}

Table \ref{table:1} shows a set of micro benchmarks, encrypting 1,024 integers, that display the scalability issues of Rache \cite{10.1145/3588920}. We vary the number of \textit{pivots}, i.e. the number of precomputed and stored ciphertexts, note the performance of Rache's encryption, and report the ratio of this value compared to CKKS \cite{asiacrypt-2017-28280} for the same numbers. On average, vanilla CKKS encryption took around 57.6 ms.

In the batch version of Rache, the implementer is expected to cache a number of ciphertexts up to $2^\kappa - 1$. Recall that the security parameter $\kappa$ is related to the number of bits in a key. The HES \cite{cryptoeprint:2019/939} defines this at three different levels: 128-bit, 192-bit, and 256-bit. Thus, even if Rache were to outperform CKKS \textit{with} multiprocessing at the parameters labeled above, this would not be the case for a large enough security parameter. Since we use a radix of $r = 2$, this would be
\begin{equation*}
    \lfloor \log_2(2^\kappa - 1) \rfloor \approx \kappa - 1
\end{equation*}
For a large enough security parameter, and since $\kappa$ is an integer. In the case of the HES, this would require us to cache 127, 191, and 255 ciphertexts in total in order for proper security to be maintained.

\subsubsection{Real-World Data Sets} 

The total performance over an entire data set for all three schemes (CKKS, Rache, and Zinc) is shown in Table \ref{table:2}. By default, $n = 32$ pivots are used on all three data sets due to the largest values in the Bitcoin dataset \cite{bitcoin} being very large. A secondary test is done with $n = 9$, as this is the absolute minimum number of pivots that can accommodate the values in the hg38 \cite{hg38} dataset (less than $2^9$), making the comparison as fair as possible.

When Rache was given an absolute minimum of $n = 9$ pivots, it was able to outperform even Zinc, exhibiting a total encryption time reduction of about 39.3\%. Recall, however, that the choice of number of cached pivots in the batch variant of Rache roughly corresponds to a security parameter. We stress that with a value as small as $\kappa = 9$, the security of the scheme in this state is dubious at best, especially given that, by default, Microsoft SEAL \cite{sealcrypto} uses a security parameter of $128$.

In every other case, we see that Rache only marginally outperforms CKKS on all three datasets with $n = 32$ pivots. Additionally, on all three datasets, Zinc defeats both Rache and CKKS by at least a reduction of 50\% in total encryption time, which is consistent with the theoretical analysis in a prior section.

\begin{table}
\centering
\begin{tabular}{l r r r}
 \hline
          &  Covid-19 & Bitcoin & Human Genome 38 \\ 
 \hline
    CKKS  & 19.2 sec & 62.2 sec & 1983.6 sec \\
    Rache & 18.4 sec & 61.2 sec & 1789.4 sec \\
    $\hookrightarrow n = 9$ & N/A & N/A & 493.2 sec \\
    Zinc  &  8.2 sec & 24.8 sec &  812.6 sec \\
 \hline
\end{tabular}
 \caption{Total Encryption Performance on Real-World Data}
 \label{table:2}
\end{table}

\section{Conclusion}

We have introduced Zinc, a constant-time, extremely low precomputation cost encryption scheme that both theoretically and empirically has shown at least factor-of-two performance benefits over an optimized implementation of a modern fully homomorphic encryption scheme, as well as the current state-of-the-art in optimization methods for homomorphic encryption. We addressed the correctness of Zinc, and showed that Zinc meets the IND-CPA security goal as long as the underlying scheme is also secure. Zinc also marks a large reduction in system complexity compared to the current state-of-the-art.

It is conceivable that further optimizations can be made to the Zinc scheme. For instance, it was discussed that, despite the large boost in performance, the overall runtime complexity of Zinc's encryption is still $\mathcal{O}(N \log N)$ due to the need to perform a discrete Fourier transform. Generating randomized error polynomials directly in the Fourier space would allow us to bypass this step, reducing the overall complexity to $\mathcal{O}(N)$. Additionally, applying multiprocessing, particularly with the polynomial addition step, would undoubtedly be another avenue for increasing performance. 

\bibliographystyle{aaai} \bibliography{references}

\end{document}